\renewcommand{\t}{\textsc{t}}
\newcommand{\f}{\textsc{f}}
\newcommand{\couic}[1]{}
\newcommand{\couicfootnote}[1]{}
\newcommand{\couicefootnote}[1]{}
\newcommand{\ket}[1]{| #1 \rangle}
\newcommand{\bra}[1]{\langle #1 |}
\newcommand{\trace}{\textrm{Tr}}
\newcommand{\ii}{\mathrm i}
\renewcommand{\H}{\mathcal{H}}
\newcommand{\N}{\mathcal{N}}
\newcommand{\Z}{\mathbb{Z}}
\begin{document}

\title{An overview of Quantum Cellular Automata} 
\subtitle{}
\titlerunning{QCA---overview}

\author{P. Arrighi}

\institute{
P. Arrighi \at Aix-Marseille Univ., Universit\'e de Toulon, CNRS, LIS, Marseille, and IXXI, Lyon, France\\
\email{pablo.arrighi@univ-amu.fr} 
}

\date{\today}
\maketitle

\begin{abstract}
Quantum cellular automata are arrays of identical finite-dimensional quantum systems, evolving in discrete-time steps by iterating a unitary operator $G$. Moreover the global evolution $G$ is required to be causal (it propagates information at a bounded speed) and translation-invariant (it acts everywhere the same). Quantum cellular automata provide a model/architecture for distributed quantum computation. More generally, they encompass most of discrete-space discrete-time quantum theory. We give an overview of their theory, with particular focus on structure results; computability and universality results; and quantum simulation results.  
\end{abstract}

\vspace{0.7cm}
{\centering {\em This paper is dedicated to my high school Mathematics teacher, Anne Lef\`evre.}}

\section{Introduction}

Von Neumann provided the modern axiomatisation of quantum theory in terms of the density matrix formalism{ \cite{NeumannQT}} in 1955. He also invented the cellular automata (CA) model of computation \cite{Neumann} in 1966, but never brought the two together. Feynman suggested doing so \cite{FeynmanQC,FeynmanQCA} in 1986, just as he was inventing the very concept of quantum computation (QC).

Indeed, confronted with the inefficiency of classical computers for simulating quantum physics, Feynman realized that one ought to use quantum devices instead \cite{FeynmanQC}. What better than a quantum system in order to simulate another quantum system? Soon afterwards \cite{FeynmanQCA} he introduced Quantum Cellular Automata (QCA) for two reasons. First because they constituted a promising architecture for the implementation of quantum simulation devices---as demonstrated nowadays with cold atoms on optical lattices, integrated quantum optics or superconducting qubits. Second, because the quantum simulation of a quantum physical phenomena requires that we are able to describe it ``in terms of qubits''. Most often, this qubit description is obtained by formulating a discrete-space discrete-time version of the original continuous description of the phenomena---i.e. a QCA model for it. 

Notice that your usual, numerical simulation of the phenomena on a classical computer, would also require that the phenomena be described ``in terms of bits''. But these numerical schemes are not usually thought of as being physically legitimate themselves, because they tend to be unaesthetic or worse break fundamental symmetries. For instance, applying finite-difference methods upon the partial differential equation governing the propagation of a particle, will typically break unitarity, making it unphysical. A QCA model, on the other hand, has to remain physical, and unitary (with some work it may even retain Lorentz-covariance). In this sense, QCA models may be thought of as constituting physically legitimate descriptions of the phenomena themselves. Moreover some are way simpler and more explanatory than the original continuous description, as we shall see. Thus, the provision of toy models for theoretical physics is another, strong reason to study QCA.

Yet another strong reason lies at the heart of Theoretical Computer Science with the basic question: What is a computer, ultimately? Which key resources are granted to us by nature, for the sake of computing? Tentative answers to these questions have been obtained by abstracting away from particles and forces, to reach formal models of computation---such as the Turing machine. Turing machines used to be our best answer. Nowadays, however, spatial and quantum parallelism need be taken into account, leading us to propose models of distributed quantum computation. Amongst the different models of distributed QC, QCA are of the most established. Just like CA, QCA account for space ``as we know it'' (i.e. mostly euclidean). Thus, they constitute a framework to model and reason about problems in spatially-sensitive distributed Quantum Computation \cite{MazoyerFiring}. For instance we may wonder, as we shall see, whether there exists an intrinsically universal QCA, i.e. capable of simulating all others in a spacetime-preserving manner; or whether QCA evolutions are computable altogether. 

\begin{figure}
\centering
\includegraphics[scale=0.5, clip=true, trim=0cm 0cm 0cm 0cm]{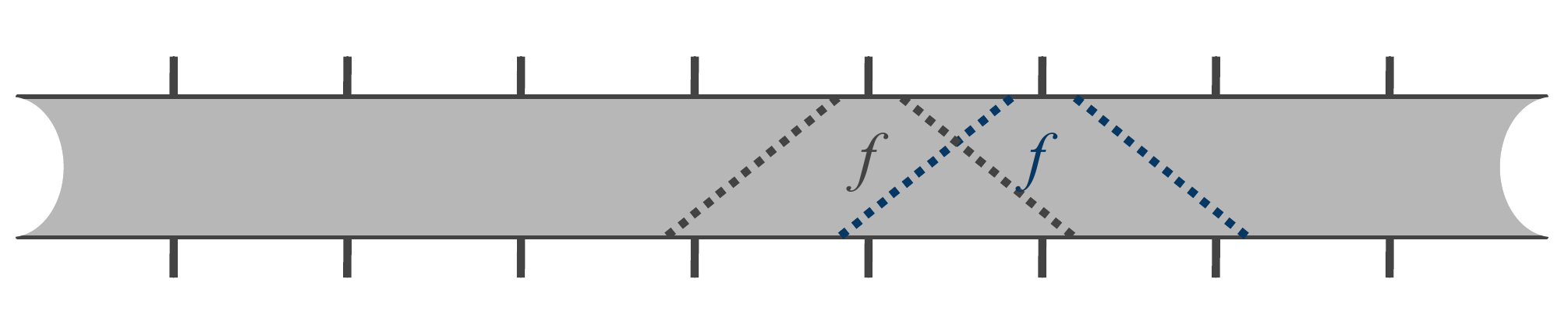}
\caption{\label{fig:structure} \label{fig:AxiomaticQCA} {\em Axiomatic definition of a QCA} : a translation-invariant unitary operator, such that information does not propagate too fast. Each wire carries a quantum system, time flows upwards.}
\end{figure}

A QCA is an array of identical finite-dimensional quantum systems. The whole array evolves in discrete-time steps by iterating a linear operator $G$. Moreover the global evolution $G$ is required to be translation-invariant (it acts everywhere in the same way), causal (information cannot be transmitted faster than some fixed number of cells per time step), and unitary (the condition required by the postulate of evolutions in quantum theory, akin to reversibility). See Fig. \ref{fig:AxiomaticQCA}. This style of definition is `axiomatic', in the sense that it characterizes QCA as the sole mathematical object fulfilling a number of high-level principles. It is the natural `quantization' of the classical definition \cite{SchumacherWerner,ArrighiLATA,ArrighiIJUC}. But contrary to its classical counterpart the axiomatic definition did not immediately yield a straightforward way of constructing the instances of this model. A great deal of effort has been dedicated towards understanding their structure, in terms of infinitely repeating circuits of local, quantum gates  \cite{ArrighiUCAUSAL,ArrighiJCSS,ArrighiPQCA}.

\paragraph{Roadmap.} We will tackle the above themes in a somewhat reversed order. We will start with the axiomatic definition of QCA, the consequent structure theorems, their origins, in Section \ref{sec:structure}. This will place us in a position to recall the main universality results and consequences in computability, in Section \ref{sec:universality}. QCA models of particle physics, whether for the sake of quantum simulation or as toy models for theoretical physics, will be discussed in Section \ref{sec:simulation}. Section \ref{sec:conclusion} summarizes the progress QCA theory has made, and some of the challenges that remain ahead. 

\paragraph{Foreword.} I aimed to state the fundamental results of the field in a mathematical manner, and then touch on the many fascinating results around with just a word of explanation---hoping to show how they relate to each other and thereby draw a coherent picture. I favoured the logical order over the chronological, and citations over long and necessarily incomplete lists of surnames.\\ 
Even with more than $120$ citations and in spite of my best efforts, I realize that this remains to some extent a personal account: I needed to select what seemed to be the most significant contributions, and may still be unaware of other great advances. Do get in touch if you have suggestions for the subsequent versions. Whist writing I learnt that a technically more comprehensive review on QCA was being written \cite{FarrellyReview}. Hopefully the two will complement each other, updating \cite{Wiesner}.\\
I needed to assume a certain knowledge of both the foundations of CA theory, and the foundations of quantum theory. Two great references for this purpose are \cite{KariSurvey} and \cite{NielsenChuang}, respectively.  

\section{Structure}\label{sec:structure}

\subsection{State space}\label{subsec:statespace}

A Quantum Cellular Automata (QCA) is an $n$-dimensional array of identical $d$-dimensional quantum systems. In other words, each cell is a qudit, i.e. a normalised vector in the $d$-dimensional complex space $\mathbb{C}^d$. Since there are $\Z^n$ such cells, the overall state space should morally be something like ``$\mathcal{H}=\bigotimes_{\mathbb{Z}} \mathbb{C}^d$''. Unfortunately this is not a Hilbert space (e.g. the inner product may diverge), see e.g. \cite{NielsenChuang} Subsections 2.1--2.2 for recaps on the notions of inner products, Hilbert spaces and tensor products. If one is willing to switch from the Schr\"odinger picture to Heisenberg picture and pay the price of abandoning Hilbert spaces for $C^*$-algebras \cite{BratteliRobinson}, then one can make sense of QCA over such a space \cite{SchumacherWerner}. In \cite{ArrighiLATA,ArrighiIJUC} we were able to develop a simpler alternative, which is to assume that basic configurations are mostly empty. The equivalence between the two approaches is given in \cite{ShakeelInfinite}. 
\begin{definition}[Configurations]
Consider $\Sigma$ a finite set, called the {\em alphabet}, with $0$ a distinguished element of $\Sigma$, called the \emph{empty} state. A \emph{configuration} $c$ over $\Sigma$ is a function $c:
\Z^n \longrightarrow \Sigma$, i.e. mapping $(i_i,\ldots,i_n)$ to $c_{i_i\ldots i_n}\in\Sigma$, 
such that the set of the $(i_i,\ldots,i_n)\in\Z^n$ such that $c_{i_i\ldots i_n}\neq 0$, is finite. 
The set of all configurations will be denoted $\mathcal{C}_{\Sigma}$ or just $\mathcal{C}$.
\end{definition}
Notice that $\mathcal{C}$ is countable. Thus, we can now consider the Hilbert space of superpositions of configurations. 
\begin{definition}[State space] 
The {\em Hilbert space of configurations} is that having orthonormal basis $\{\ket{c}\}_{c\in\mathcal{C}}$. It will be denoted $\mathcal{H}_{\mathcal{C}}$ or just $\mathcal{H}$.
\end{definition}

\subsection{QCA}\label{subsec:QCA}

The global evolution of a QCA is required to be translation-invariant, meaning that it acts everywhere in the same way.
\begin{definition}\textbf{(Translation-invariance)}\label{def:translation-invariance} 
Let $\tau_k$ denote the translation operator along the $k^{{th}}$ dimension, i.e. the linear operator over $\mathcal{H}$ which maps $\ket{c}$ into $\ket{c'}$, where $\ket{c'}$ is such that for all $(i_1,\ldots,i_n)$, $c'_{i_1\ldots i_k \ldots i_n}=c_{i_1\ldots i_k+1 \ldots i_n}$. A linear operator $G$ over $\mathcal{H}$ is said to be 
\emph{translation-invariant} if and only if $G\tau_k=\tau_k G$ for every $k$.
\end{definition}
Moreover, the global evolution of a QCA is required to be causal, meaning that information propagates at a bounded speed. In order to formulate this property, we need to be able to speak of the state of a cell $x$ at time $t+1$, to say that it should only depend on the state of its neighbours $x+\N$ at time $t$---where $\N$ is a fixed set of vectors, which added to any $x$, lead to its neighbours. But in order to speak of the state of a subsystem in quantum theory we must switch to the density matrix formalism, which is not so trivial---again see e.g. \cite{NielsenChuang} Subsection 2.4. Summarizing, a density matrix represents a probability distributions over pure states $\{p_i, \ket{\psi_i}\}$ as the corresponding convex sum of projectors $\rho=\sum_i p_i \ket{\psi_i}\bra{\psi_i}$. Thus, when pure states $\ket{\psi}$ evolve according to $\ket{\psi'}=G\ket{\psi}$, density matrices $\rho$ evolve according to $\rho'=G\rho G^\dagger$. Then, the state of cell $x=(i_1,\ldots,i_n)$ at time $t+1$ is obtained by tracing out all the of the other cells, i.e. $\rho'_x=\trace_{\overline{x}}(\rho')$, with $\trace_{\overline{S}}(.)$ the linear operator such that $\trace_{\overline{S}}(\ket{c}\bra{d})=(\delta_{c_{\overline{S}},d_{\overline{S}}})\ket{c_S}\bra{d_S}$. Similarly, the state of its neighbours $x+\N$ at time $t$ is obtained by tracing out the rest, i.e. $\rho_{x+\N}=\trace_{\overline{x+\N}}(\rho)$. We get :
\begin{definition}[Causality]\label{def:causality} 
A linear operator $G$ over $\mathcal{H}$ is said to be 
\emph{causal} with neighbourhood $\N\subset \Z_n$ if and only if for any $x\in\Z_n$ there exists a function $f$ such that for any $\rho$ over $\mathcal{H}$, we have $\rho'_x = f (\rho_{x+\N}),$ where $\rho'=G\rho G^\dagger$.
\end{definition}
To a certain extent, this $f$ may be thought of as the equivalent of the local rule of a classical CA. However, unlike for classical CA, this $f$ does not straightforwardly yield a local mechanism whereby $\rho'$ may be computed from $\rho$, for two reasons. First, because $f$ by itself is not unitary (it maps many cells into one) and thus cannot be considered to be physical. Second, because in quantum theory, knowing the states $\rho'_x$ and $\rho'_{y}$ of cells $x$ and $y$ does not entail knowing their joint states $\rho'_{\{x,y\}}$---as these may be entangled. Hence, unlike for classical CA, the following axiomatic definition of QCA does not immediately yield a straightforward way of constructing / enumerating all of the instances of the model.
\begin{definition}[QCA]\label{def:qca} 
A {\em QCA} is a linear operator over $\mathcal{H}$
which is translation-invariant, causal and unitary.
\end{definition}
We needed structure theorems in order to tame this axiomatic definition, into a constructive one.

\subsection{Structure theorems}\label{subsec:structh}

From the above axiomatic definition of QCA, we were able to eventually deduce that every QCA can be directly simulated by a finite depth quantum circuit of local unitary gates, infinitely repeating across space \cite{ArrighiUCAUSAL,ArrighiJCSS}. In order to do so each cell of the QCA needs be encoded into a doubled up cell.
\begin{theorem}[Unitary plus causality implies localizability]\cite{ArrighiUCAUSAL,ArrighiJCSS}\label{th:ucausal}~\\
Let $G$ be an $n$-dimensional QCA with alphabet $\Sigma$. Let $E_x$ be the isometry from $\H_\Sigma$ to $\H_\Sigma\otimes\H_\Sigma$ which adds an ancillary empty subcell at $x$, i.e.  $E_x\ket{\psi}=\ket{0}\otimes\ket{\psi}$. This mapping can be trivially extended to whole configurations, yielding the mapping $E:\H_{C^{\Sigma}}\to\H_{C^{\Sigma^2}}$. There exists an $n$-dimensional QCA $H$ with alphabet $\Sigma^2$, such that $HE=EG$, where $H$ admits the following multi-layer quantum circuit representation:
\begin{align*}
H=(\bigotimes S_x)(\prod_x K_x) \label{eq:Kprod}
\end{align*}
with
\begin{itemize}
\item $S_x$ is the swap between the two subcells at $x$ and hence is local to $x$.
\item $K_x$ is $(G^\dagger S_x G)$, which turns out to be local to the neighbourhood $x+\N$.
\end{itemize}
\end{theorem}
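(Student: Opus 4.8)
The plan is to construct $H$ \emph{explicitly} as the operator given in the statement and then to verify, in turn, that each $K_x$ is local to $x+\N$, that the displayed circuit collapses to a product of two commuting copies of $G$, and that this collapsed form yields both the intertwining $HE=EG$ and the QCA axioms. Throughout I regard the doubled cell at $x$ as a pair (ancilla subcell, data subcell), so that $E\ket{\psi}=\ket{\tzero}_{\mathrm{anc}}\otimes\ket{\psi}_{\mathrm{data}}$, and I write $G_{\mathrm{data}}$ (resp. $G_{\mathrm{anc}}$) for the lift of $G$ acting on the data (resp. ancilla) subcells only; these are two commuting QCA over $\Sigma^2$, each causal with neighbourhood $\N$. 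In the formula $K_x=G^\dagger S_x G$ the operator $G$ is read as $G_{\mathrm{data}}$, which is what makes the expression act on $\H_{C^{\Sigma^2}}$.

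The main step, and the one I expect to be the genuine obstacle, is to convert the density-matrix formulation of causality (Definition \ref{def:causality}) into a Heisenberg-picture statement about operator algebras: if $G$ is causal with neighbourhood $\N$ then $G^\dagger\mathcal{A}_x G\subseteq\mathcal{A}_{x+\N}$, where $\mathcal{A}_S$ denotes the algebra of operators supported on the cells of $S$. I would argue by duality. The assignment $\rho_{x+\N}\mapsto\rho'_x$ is forced to be a linear, completely positive, trace-preserving channel $\mathcal{E}$, since it is the composition of $\rho\mapsto G\rho G^\dagger$ with a partial trace and, by hypothesis, factors through $\rho_{x+\N}$. Its Heisenberg dual $\mathcal{E}^\ast$ sends $\mathcal{A}_x$ into $\mathcal{A}_{x+\N}$, and pairing against an arbitrary $\rho$ via $\trace(A\,\rho'_x)=\trace(G^\dagger A G\,\rho)$ identifies $G^\dagger A G$ with $\mathcal{E}^\ast(A)$, whence $G^\dagger A G\in\mathcal{A}_{x+\N}$ for every $A\in\mathcal{A}_x$. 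Applying this to $S_x\in\mathcal{A}_x$ and to the QCA $G_{\mathrm{data}}$ gives $K_x=G_{\mathrm{data}}^\dagger S_x G_{\mathrm{data}}\in\mathcal{A}_{x+\N}$, i.e. $K_x$ is local to $x+\N$. The delicate points to nail down here are that $f$ may indeed be taken linear, and that the trace-pairing characterisation of an operator's support is valid in this quasi-local, finite-support setting.

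Next I would compute the circuit. Because the $S_x$ act on pairwise-disjoint cells they commute, hence so do the $K_x$, and the product $\prod_x K_x$ is order-independent and well defined. Over any finite set of indices the intermediate factors $G_{\mathrm{data}}G_{\mathrm{data}}^\dagger=\Id$ telescope, giving $\prod_x K_x=G_{\mathrm{data}}^\dagger\big(\bigotimes_x S_x\big)G_{\mathrm{data}}$; this simultaneously dispels any convergence worry, the right-hand side being a single unitary. Writing $S=\bigotimes_x S_x$ for the global swap and using $S\,G_{\mathrm{data}}\,S=G_{\mathrm{anc}}$ (conjugation by the global swap interchanges the two copies), I obtain
\[
H=S\,\Big(\prod_x K_x\Big)=S\,G_{\mathrm{data}}^\dagger\,S\,G_{\mathrm{data}}=G_{\mathrm{anc}}^\dagger\,G_{\mathrm{data}}.
\]
The intertwining is then immediate: $G_{\mathrm{data}}E\ket{\psi}=\ket{\tzero}_{\mathrm{anc}}\otimes G\ket{\psi}_{\mathrm{data}}$, and applying $G_{\mathrm{anc}}^\dagger$ leaves the data untouched while acting on the all-empty ancilla. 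Here I invoke that $G$ fixes the quiescent configuration, $G\ket{\tzero}=\ket{\tzero}$: indeed $\ket{\tzero}$ is the unique translation-invariant element of $\C$ (any other configuration has infinite translation orbit, incompatible with finite norm), so the translation-invariant unitary $G$ maps it to a unit multiple of itself, a global phase I absorb into $H$. Thus $HE\ket{\psi}=\ket{\tzero}_{\mathrm{anc}}\otimes G\ket{\psi}_{\mathrm{data}}=EG\ket{\psi}$.

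Finally I would confirm that $H$ is a genuine QCA over $\Sigma^2$. Unitarity is clear from $H=G_{\mathrm{anc}}^\dagger G_{\mathrm{data}}$, a product of unitaries; translation-invariance holds because $S$, $G_{\mathrm{anc}}$ and $G_{\mathrm{data}}$ each commute with every $\tau_k$; and causality with a bounded neighbourhood follows either from the factorisation (a composite of causal QCA with neighbourhoods $\N$ and $-\N$) or, more transparently, from the depth-two circuit form, whose gates $S_x$ and $K_x$ are supported on $x$ and $x+\N$. The one genuinely substantive ingredient is the operator-algebraic reformulation of causality of the second paragraph; the remaining steps are algebraic identities and routine QCA bookkeeping.
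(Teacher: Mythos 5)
Your proposal is correct and follows essentially the same route as the paper's own proof outline: the Schr\"odinger-to-Heisenberg reformulation of causality used to localise each $K_x=G^\dagger S_x G$ on $x+\N$, the extension of $G$ so that it acts on one layer of subcells only, and the telescoping identity $\prod_x K_x=G^\dagger\big(\bigotimes_x S_x\big)G$. The additional details you supply --- the duality argument behind the Heisenberg-picture lemma, the observation that translation invariance forces $G\ket{\tzero}$ to be a phase times $\ket{\tzero}$, and the explicit verification of the QCA axioms for $H$ --- simply flesh out steps the paper's outline leaves implicit.
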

\begin{proof}{\em outline.}
\begin{enumerate}
\item Show the equivalence between causality in the Schr\"odinger picture (Def. \ref{def:causality}) and causality in the Heisenberg picture \cite{SchumacherWerner}, which states that if $U$ is causal with neighbourhood $\N$ and $A$ of a local operator $A_x\otimes I$, then $A'=U^\dagger AU$ is a local operator $A'_{x+\N}\otimes I$. 
\item Extend $G$ to act only on the right subcells, leaving the left subcells unchanged. Apply the previous point to obtain that each $K_x$ is a local operator $K_{x+\N}\otimes I$.
\item Notice that $\prod_x K_x=\ldots G^\dagger S_x G G^\dagger S_{x+1} G\ldots = G^\dagger (\prod_x S_x) G$. Thus $\prod_x K_x$ computes $G$ and swaps it away to the left subcells.
\end{enumerate}
\end{proof}
Remark that each $K_x$ plays the role a local update mechanism for the cell $x$, as it computes the future state of that cell from the local information available at $x+\N$, and then puts it aside in the left subcell. It may alter cells $x+\N$ for this purpose, but this is not a problem because the $K_x$ commute with one another.

This `direct simulation' of QCA $G$ by QCA $H$ does show that any QCA can be put into the form of a finite depth quantum circuit, with local unitary gates $K$ and $S$, up to a simple encoding. Indeed, $K_x$ and $K_y$ can be done in parallel whenever they do not overlap, i.e. when $x+\N\ \cap\ y+\N =\varnothing$. In the $\N=\{0,1\}$ case, for instance, the construction results in a $2n$--layered circuit. This $\N=\{0,1\}$ case may seem restrictive, but it is not. Just like for CA \cite{IbarraJiang}, one can always group the cells into supercells, relative to which the neighbourhood reduces down to $\N=\{0,1\}$.\\ 
The notion of `intrinsic simulation' of a QCA $G$ by a QCA $H$ is obtained by relaxing the notion of direct simulation in two ways. First, by allowing for such groupings of cells into supercells, yielding $G'$ and $H'$. Second, by allowing $H'$ to be iterated $k$ times before it directly simulates $G'$. I.e. $H$ intrinsically simulates $G$ when $H'^k$ directly simulates $G'$. The notion, made rigourous in \cite{ArrighiNUQCA}, allowed us to reach an even simpler quantum circuit-form for QCA \cite{ArrighiPQCA}. 
\begin{definition}[PQCA]\label{def:pqca}
An $n$-dimensional {\em partitioned  QCA} (PQCA) $G$ is induced by a {\em scattering unitary} $U$ taking a hypercube of $2^n$ cells into a hypercube of $2^n$ cells, i.e. acting over $\mathcal{H}_{\Sigma}^{\otimes 2^n}$, and preserving quiescence, i.e. $U\ket{0\ldots 0}=\ket{0\ldots 0}$. Let $J=(\bigotimes_{2\mathbb{Z}^n} U)$ over $\mathcal{H}$ and $\tau=\tau_1\ldots\tau_n$ the diagonal translation. The induced global evolution is $J$ at even steps, and $\tau^\dagger J \tau$ at odd steps.
\end{definition}
In other words, PQCA work by partitioning the grid of cells into supercells, applying a local operation $U$ on each supercell, translating the partition along $(1 \ldots 1)^T$, applying $U$ to the new macrocells, etc., as illustrated in Fig. \ref{fig:UsimV}. Of course one can now change scale and take the view that supercells are now cells, whose subcells are the former cells. Then the data contained in each cell can be thought of as being subdivided into $2^n$ subcells, about to be sent towards the $(\pm 1 \ldots \pm 1)^T$ direction. In this picture, a PQCA is therefore the alternation of a `scattering step' (where $U$ gets applied on each cell) followed by an `advection step' (where subcells get synchonously exchanged with their corresponding cross-diagonal neighbour)---a scheme which physicists refer to as Lattice Gas Automata. Either way, PQCA suffer the small inconvenience that homogeneity is now over two steps. For instance, in the Lattice Gas Automata picture, the cells at time $t+1/2$ are translated by $(1/2 \ldots 1/2)^T$ with respect to the cells at time $t$. This can be fixed elegantly by doubling up the lattice like a checkerboard, with black and white lattices ignoring each other. Then PQCA recover full-translation invariance and fall back within the class of QCA with neighbourhood $\N=\{-1,+1\}\times\ldots\times\{-1,+1\}$, i.e. QCA such that the state of a cell depends only upon that of its diagonal neighbours at the previous time step---but not on its own previous state, see Fig. \ref{fig:DiracQCA}. We were able to prove the following. 
\begin{theorem}[PQCA are intrinsically universal]\cite{ArrighiPQCA} \label{th:pqca}
Given any $n$-dimensional QCA $G$, there exists an $n$-dimensional PQCA $H$ which intrinsically simulates $G$.
\end{theorem}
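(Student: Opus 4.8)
The plan is to chain the two structure results already at our disposal. First I would invoke Theorem~\ref{th:ucausal} to replace the opaque axiomatic object $G$ by a concrete finite-depth circuit of local unitaries, and then argue that such a circuit---being almost a PQCA already---can be folded into the rigid scatter-and-advect form of Definition~\ref{def:pqca}, at the cost of exactly the two liberties that \emph{intrinsic} simulation grants us: regrouping cells into supercells, and running the simulator for several steps per simulated step.

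Concretely, I would first group the cells of $G$ into supercells so that its neighbourhood reduces to $\N=\{0,1\}$, which is always possible just as for classical CA, and as noted right after Theorem~\ref{th:ucausal}. Applying that theorem then yields, up to the ancilla-doubling encoding $E$, a QCA $H_0=(\bigotimes_x S_x)(\prod_x K_x)$ whose every factor is a translation-invariant copy of one of two fixed local unitaries: the single-cell swap $S$, and the block gate $K$ acting on a hypercube $x+\N$. Since the $K_x$ commute and fall into finitely many classes acting on disjoint blocks, $\prod_x K_x$ splits into a bounded number of layers of non-overlapping identical gates (the $2n$ layers remarked upon after Theorem~\ref{th:ucausal}); together with the depth-one swap layer this exhibits $G$, after encoding, as a brick-wall circuit of \emph{bounded depth}, built from finitely many fixed local gates arranged translation-invariantly.

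It then remains to realise such a bounded-depth brick-wall by a PQCA. Here I would exploit that a single PQCA step---apply the scattering unitary $U$ on each supercell, translate diagonally by $\tau$, and apply $U$ again---is itself a two-offset block circuit; in the Lattice-Gas reading of Definition~\ref{def:pqca}, the advection step is precisely a programmable routing of subcells to their cross-diagonal neighbours. I would therefore enlarge the supercell so that one PQCA block coincides with one circuit block, and design the scattering unitary $U$ so that, across $k$ consecutive PQCA steps, the diagonal translations successively deliver the correct cells under $U$ while $U$ applies the appropriate gate of the brick-wall. Because intrinsic simulation permits this iteration $H'^k$ together with this grouping, matching the bounded circuit depth by a bounded number $k$ of PQCA steps closes the argument and produces a PQCA $H$ that intrinsically simulates $G$.

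The hard part is not the reduction to a circuit---that is Theorem~\ref{th:ucausal}---but squeezing a \emph{multi-layer, multi-gate} circuit into the extremely rigid PQCA format, in which one and the same scattering unitary $U$ must fire at every supercell and every step. The several distinct gates of the brick-wall (the $K$ of each class, and $S$) have to be encoded into the single $U$, and made to act at the right place and the right step with no external clock or addressing: the only tools available are the uniform $U$, the fixed diagonal advection, and the enlarged but still finite cell alphabet. Synchronising the firing of each layer across the whole lattice by purely local, translation-invariant means---so that the composite of the $k$ PQCA steps equals $H_0$ on the encoded subspace, and in particular remains unitary and quiescence-preserving as required by Definition~\ref{def:pqca}---is the genuine technical obstacle, and is where the routing and encoding must be designed with care.
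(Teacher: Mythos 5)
Your proposal follows essentially the same route as the paper's own proof: space-group the cells so the neighbourhood becomes $\N=\{0,1\}\times\ldots\times\{0,1\}$, apply Theorem~\ref{th:ucausal} to obtain the layered circuit of commuting $K_x$ gates plus swaps, and then devise a PQCA whose single scattering unitary, with ancillary systems handling the timing, performs the successive layers across several iterations as intrinsic simulation permits. You also correctly identify the synchronisation/encoding of the distinct gates into one uniform $U$ as the genuine technical content, which is exactly what the paper's ``appropriate ancillary systems used to get the timing right'' refers to.
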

\begin{proof}{\em outline.}
\begin{enumerate}
\item Space-group the cells of the QCA $G$ so that its neighbourhood is down to $\N=\{0,1\}\times\ldots\times\{0,1\}$.
\item Apply Th. \ref{th:ucausal}, yielding $2^n$ successive partitions for applying the $K_x$ in a non-overlapping way. 
\item Devise a PQCA $H$, with appropriate ancillary systems used to get the timing right, so that the scattering $U$ successively performs $K$ translated by the vectors in $\N$, according to the corresponding partition. 
\end{enumerate}
\end{proof}

\subsection{Advanced structure theorems}\label{subsec:advancedstructh}

The Theorems of Subsec \ref{subsec:structh} show that QCA can be {\em simulated}, in a spacetime-preserving manner, by multi-layer quantum circuits, and that these can in turn be simulated by PQCA. For RCA, an analogous result was given in \cite{Durand-LoseBlock,ArrighiBLOCKREP}. Still, does it mean that QCA are {\em exactly} multi-layer quantum circuits, or PQCA? A number of researchers have been pursuing this question. Our current knowledge of these issues varies according to the spatial dimension. It helps to recall the corresponding results on Reversible Cellular Automata (RCA), as these are the classical counterparts of QCA. 

In 1D, RCA are exactly the set of translation-invariant two-layer circuits of reversible gates and partial shifts \cite{KariBlock}. The analogous result holds true for 1D QCA, as we can show that they are exactly the set of translation-invariant two-layered quantum circuits \cite{SchumacherWerner,ArrighiLATA,ArrighiIJUC}. This leads to a group theoretical classification of 1D QCA, called the index theory \cite{GrossNesmeVogtsWerner}---a single number characterizes the flux of information within the 1D QCA. Still, 1D QCA are not exactly PQCA \cite{ShakeelMeyer}. 

In 2D, RCA are again exactly the set of translation-invariant three-layer circuits of reversible gates and partial shifts \cite{KariBlock}. This analogous result holds true for, translation-invariance left aside, for 2D QCA  \cite{HastingsClassification,Haah2DExactRep}. Still we were able to show that 2D QCA are not exactly PQCA \cite{ArrighiLATA,ArrighiIJUC}, with a counter-example coming from RCA \cite{KariCircuit}. This is taken a step further in \cite{ShakeelLove}, whose characterization of PQCA in terms of inclusion of algebras actually works independent of the spatial dimension.  

Apart from this characterization of PQCA, little is known in 3D and beyond. It is open whether RCA coincide with the set of translation-invariant circuits of reversible gates and partial shifts \cite{KariBlock}. Again we do not know whether the analogous result holds true for 3D QCA, but \cite{HastingsNonTrivial} holds the promises of a counter-example: ``either a nontrivial three-dimensional qudit QCA exists or a nontrivial two-dimensional fermionic QCA exists.'' Let us take this opportunity to mention fermionic QCA. 

Intuitively, fermions are indistinguishable quantum systems, such that permuting one for another does not change anything but for a global, minus sign. Technically, let $a_x$ denote the linear operator which annihilates the fermion having spin $a$ at $x$, let $a^\dagger_x$ be the corresponding creation, and similarly for the other spins. The fermionic commutation relations are $$\{a_x,a_y\} = \{b_x,b_y\} = \{a_x,b_y\}= \{a_x,b_y^\dagger\} = \{a_x^\dagger,b_y\} = 0,\ \{a_x, a_y^\dagger\} = \{b_x, b_y^\dagger\}=\delta_{x,y}I$$ 
These indeed entail that $a^\dagger_x a^\dagger_y=-a^\dagger_y a^\dagger_x$, so that $a^\dagger_x a^\dagger_x=0$, i.e. two identical fermions exclude each other. However, these anticommutation relations between $a_x$ and $a_y$ for any $x$ and $y$ also entail that $a_x$ is not actually local in the usual, qubit sense. This leads to subtle differences between computing with qubits and computing with fermions \cite{PaviaFermions}. A fermionic QCA is a quantum evolution which is prescribed in terms of how the fermionic operators evolve, rather than how qubit-local algebras evolve, see for instance \cite{PaviaMolecular,HastingsNonTrivial}. However, a direct analog of Theorem \ref{th:ucausal} still holds for fermionic QCA \cite{FarrellyCausal}. Moreover, \cite{FarrellyThesis} shows that fermionic QCA and QCA intrinsically simulate each other. In \cite{ArrighiQED}, the quantum evolution we describe is both a valid QCA and a valid fermionic QCA : whilst the notion of locality differs, that of causality coincides.

Finally, let us mention that Theorem \ref{th:ucausal} cannot be generalized to Noisy QCA, i.e. causal TPCP-maps, as these cannot always be simulated by finite-depth circuits of local TPCP-maps. This is a direct consequence of \cite{ArrighiPCA}, were we showed that there can be no such structure theorems for classical, probabilistic CA. 

\subsection{Classical bijective CA, MPU}\label{subsec:classicalandtensors} 

A reversible CA (RCA) is an invertible CA whose inverse is also causal, i.e. whose inverse is a CA. 
Depending upon the space of configuration which one considers, there may be some invertible CA which are not RCA.
This is the case over ${\cal C}$ in particular, where $F$ causal and invertible does not entail that $F^{-1}$ is causal.
Expectedly when $F^{-1}$ is not causal, its linear extension into a unitary operator $\widehat{F^{-1}}$ over ${\cal H}$ is not causal. Interestingly, however, it then turns out that $\widehat{F}$ is not causal either, even though $F$ was. In other words, the linear extension of a perfectly valid invertible CA $F$ over ${\cal C}$, may lead to a unitary operator $\widehat{F}$ over ${\cal H}$ which fails to be a valid QCA. One way to think about this is that for the update mechanism $K=\widehat{F}^\dagger S  \widehat{F}$ of Th. \ref{th:ucausal} to be local, the local operation $S$ needs be conjugated by causal operators, which may not be the case with $\widehat{F}^\dagger=\widehat{F^{-1}}$. 
We gave a concrete example of this in \cite{ArrighiLATA,ArrighiIJUC}, which is defined as follows. Let $\Sigma=\{0,\t,\f\}$, and for all $a\in\Sigma$ define $+$ as the `exclusive or' $\t+\f=\f+\t=\t$, $a+a=a$, extended so that $a+0=a$, $0+a=0$. Now let $F$ map the configuration $c=\cdots c_{i-1} c_i c_{i+1} \cdots$ into the configuration $F(c)=\ldots (c_{i-1}+ c_i)(c_{i}+c_{i+1})\ldots$.
One sees that $F$ is both bijective over ${\cal C}$ and causal, but that $F^{-1}$ is not causal, because a long subword $\f\f\ldots \f\f$ may either stem from a similar subword $\f\f\ldots \f\f$ or from a subword $\t\t\ldots \t\t$. 
It follows that the corresponding QCA is not causal. Indeed, consider the two states $$\ket{c^\pm}={\textstyle\frac{1}{\sqrt{2}}}\ket{\ldots 00}\otimes \big(\ket{\f\f\ldots \f\f}\pm\ket{\t\t\ldots \t\t}\big)\otimes \ket{00\ldots}$$
and their two images
$$\ket{d^\pm}=\ket{\ldots 00\f\f\ldots \f}\otimes (\frac{\ket{\f}\pm\ket{\t}}{\sqrt{2}}) \otimes \ket{00\ldots}.$$
We can transmit information between arbitrarily distant parties in just one step of $\widehat{F}$ as follows.
\begin{enumerate}
\item Prepare the state $\ket{c^+}$ with the first non quiescent cell in Alice's lab in Paris and the last non quiescent cell with Bob in New York.
\item Alice either leaves the state unchanged or performs a \emph{local change} by applying a phase gate $Z$ to her cell, changing $\ket{c^{+}}$ into $\ket{c^{-}}$.
\item One step of $\widehat{F}$ is performed, leading to either $\ket{d^+}$ or $\ket{d^-}$.
\item Whether Alice performed $Z$ or not has now led to a perfectly measurable change from $\frac{\ket{\f}+\ket{\t}}{\sqrt{2}}$ to $\frac{\ket{\f}-\ket{\t}}{\sqrt{2}}$ for Bob---despite him being arbitrarily far remote.
\end{enumerate}
This infinite speedup is intuitively unphysical, and should disallowed: $\widehat{F}$ is no QCA.\\
Notice that the above example would not work over the space of infinite configurations ${\cal C}_\infty$, because $F$ is non-injective over that space: the infinite configurations $\ldots \f\f \ldots$ and $\ldots \t\t \ldots$ both map to $\ldots \f\f \ldots$. Actually, it turns out that over ${\cal C}_\infty$, any bijective CA is an RCA, because $G$ causal and bijective does entail that $G^{-1}$ is causal. It follows that the linear extension of such a $G$ into a unitary operator $\widehat{G}$ is causal, and the local update mechanism $K=\widehat{G}^\dagger S  \widehat{G}$ is local indeed. Still, the QCA $\widehat{G}$ may have a much wider radius of causality $r_{\widehat{G}}$ than it had as an RCA $G$. In \cite{ArrighiNEIGH} we were able to show that $r_G+r_{G^{-1}}\leq r_{\widehat{G}}=r_{\widehat{G}^\dagger} \leq \min(2r_G+r_{G^{-1}},r_G+2r_{G^{-1}})$. However, given just  $r_G$ one cannot even bound $r_{\widehat{G}}$; in fact there is no computable function $b$ such that $r_{G^{-1}}<b(r_G)$, as was proven in \cite{KariRevUndec}. QCA are again better behaved in this sense, we showed that the radius of their backwards evolution equals that their forward evolution in \cite{ArrighiUCAUSAL,ArrighiJCSS}. Let us mention that \cite{tHooftCA} embarked on the program of studying how much physics can be recovered within such quantized RCA $\widehat{G}$. 

The early definition \cite{DurrWell,DurrUnitary} of QCA would allow for these non-causal $\widehat{F}$ over ${\cal H}$; and had to be abandoned. However, a characterization of 1D QCA as tensor networks of Matrix Product Unitaries (MPU) \cite{CiracMPU} has recently appeared, which bears strong similarities with the abandoned definition, whilst remaining causal. Indeed, with MPU one looks for a tensor $T^{so}_{is'}$ such that the circular tensor network $$V^{o_0\ldots o_n}_{i_0\ldots i_n}=T^{s_0o_0}_{i_0s_1}T^{s_1o_1}_{i_1s_2}\ldots T^{s_{n}o_n}_{i_ns_0}$$ 
is unitary matrix for any $n$. Interestingly, this is the quantum analogue of the classical bijectivity over ${\cal C}_\infty$ rather than ${\cal C}$, which made bijective CA $G$ an RCA and $\widehat{G}$ a QCA.

\subsection{Historical notes}

Whilst the definition of QCA is now well-established, its beginnings were difficult. The first attempt of a definition would require that state vector of a cell at time $t+1$, should be locally-dependent upon the state of its neighbouring cells at time $t$ \cite{DurrWell,DurrUnitary}. Whilst plausible, we realized that this definition was problematic, as it would still allow for information to propagate at an arbitrary speed \cite{ArrighiLATA,ArrighiIJUC}. The first solid axiomatic definition of QCA was given in \cite{SchumacherWerner}, in terms of $C^*$--algebra, together with a broken proof that these were in fact exactly two-layer quantum circuits, infinitely repeating across space. In \cite{ArrighiLATA,ArrighiIJUC} we rephrased the definition in terms of Hilbert spaces, and clarified the proof as sound for 1D QCA, but gave a counter-example in higher-dimensions taken from \cite{KariCircuit}. In \cite{ArrighiUCAUSAL,ArrighiJCSS} we were able to obtain Th. \ref{th:ucausal}, which we later \cite{ArrighiPQCA} completed into Th. \ref{th:pqca}. Summarizing, QCA are not exactly PQCA, but they are intrinsically simulated by them. 

It took a while, thus, to arrive at the axiomatic definition, and even longer to deduce ways of constructing / enumerating the corresponding instances of this definition. Naturally, this lack of operationality left the gap open for many competing, hands-on definitions of the same concept. A closer examination shows that these competing operational definitions would fall into three classes: the multi-layer quantum circuits \cite{PerezCheung}\cite{ArrighiUCAUSAL}, the two-layer quantum circuits \cite{BrennenWilliams,Karafyllidis,NagajWocjan,Raussendorf,SchumacherWerner,VanDam}, and PQCA \cite{WatrousFOCS,VanDam,InokuchiMizoguchi}. In \cite{ArrighiPQCA} we showed that they all simulate each other in a spacetime-preserving manner, leading us to prefer their simplest, PQCA form.

\section{Universality}\label{sec:universality}

\subsection{Intrinsic universality}\label{subsec:intrinsicuniv}

\begin{figure}[h]
\centering
\includegraphics[scale=0.8, clip=true, trim=0cm 0cm 0cm 0cm]{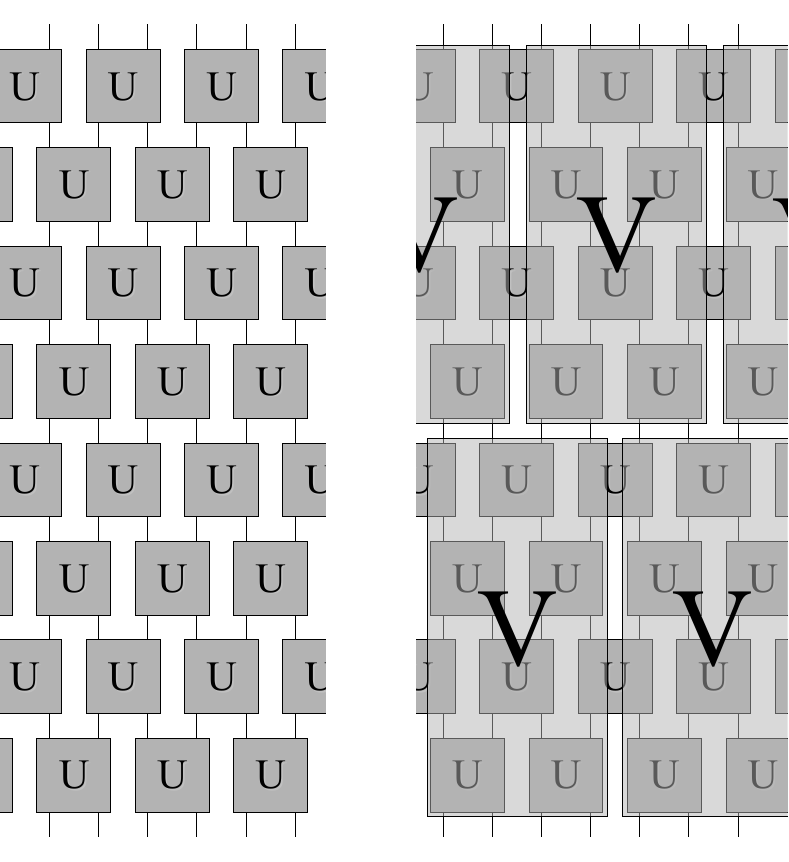}
\hspace{0.7cm}
\raisebox{0.2cm}{\includegraphics[width=3.24cm, clip=true, trim=3.5cm 1.3cm 3.4cm 1.3cm]{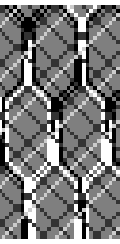}}
\caption{\label{fig:UsimV} \emph{Left:} Partitioned QCA with scattering unitary $U$. Each wire represents a cell, cells are partitioned differently at odd and even time steps. Alternatively one may think of pairs of wires as cells, but then cell positions are translated by a half between each half-step. \emph{Middle:} The PQCA with scattering unitary $U$ intrinsically simulates that with scattering unitary $V$ in four time steps. \emph{Right:} An actual scheme to perform such an intrinsic simulation between 1D PQCA.}
\end{figure}

In Subsection \ref{subsec:structh} we recalled the notion of intrinsic simulation between QCA in order to show that PQCA can simulate all other QCA in a spacetime-preserving manner. Now, once the structure of a model of computation is well-understood, the last step to take in order to try and simplify it even further, is to identify universal instances of the model. Minimal universal instances are particularly useful, as they point towards the threshold physical resources required in order to implement the entire model. Thus, we need to look for a single PQCA which can intrinsically simulate all other PQCA. 

In a PQCA, incoming information gets scattered by a fixed `scattering unitary' $U$, before getting redispatched. We need to find a universal scattering unitary $U$, see Fig. \ref{fig:UsimV}. From a computer architecture point of view, this problem can be recast in terms of finding some fundamental quantum processing unit which is capable of simulating any grid network of quantum processing units, in a space-preserving manner. From a theoretical physics perspective, this is looking for a universal scattering phenomenon, a problem which we could phrase in humorous form as: ``A physicist is taken on a desert Island where he is allowed only one type of elementary particle. Which one would he choose, whose scattering behaviour is rich enough so that it can simulate all the others?''.

\begin{figure}
 {\centering
\hfill
\begin{minipage}[t]{4cm}
\includegraphics[scale=.28, clip=true]{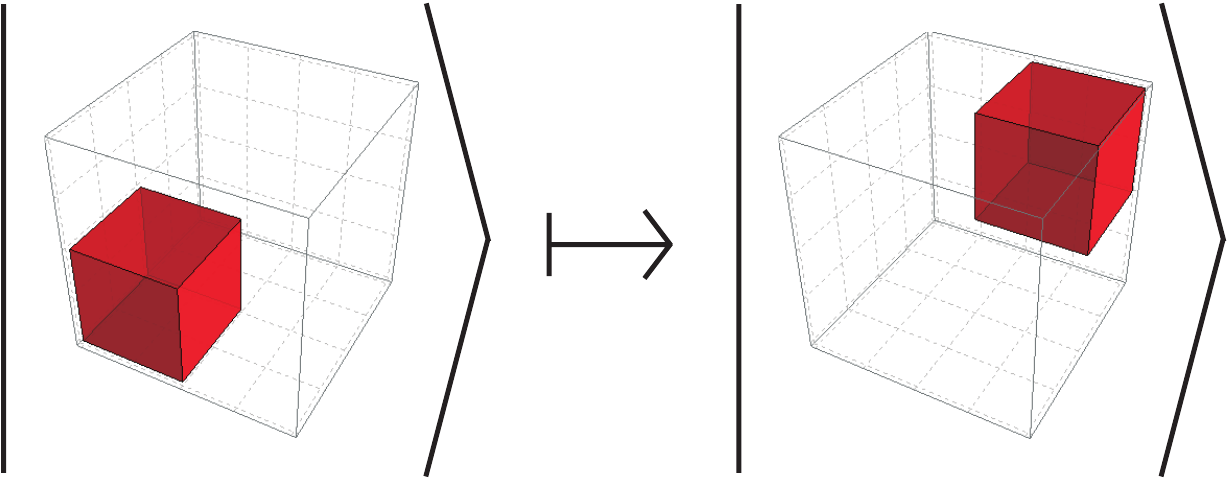}\\
{\small A signal alone moves across space.}
\end{minipage}
\hfill
\begin{minipage}[t]{4cm}
\includegraphics[scale=.28, clip=true]{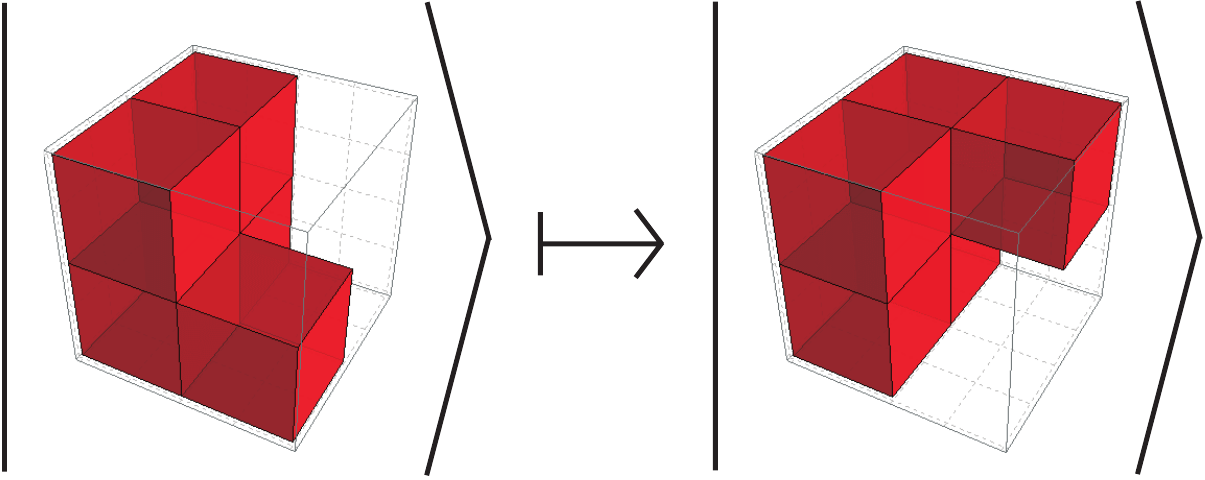}\\
{\small A signal which collides straight into a wall ``bounces'' off the wall.}
\end{minipage}
\hfill~\\
\hfill
\begin{minipage}[t]{4cm}
\includegraphics[scale=.24, clip=true]{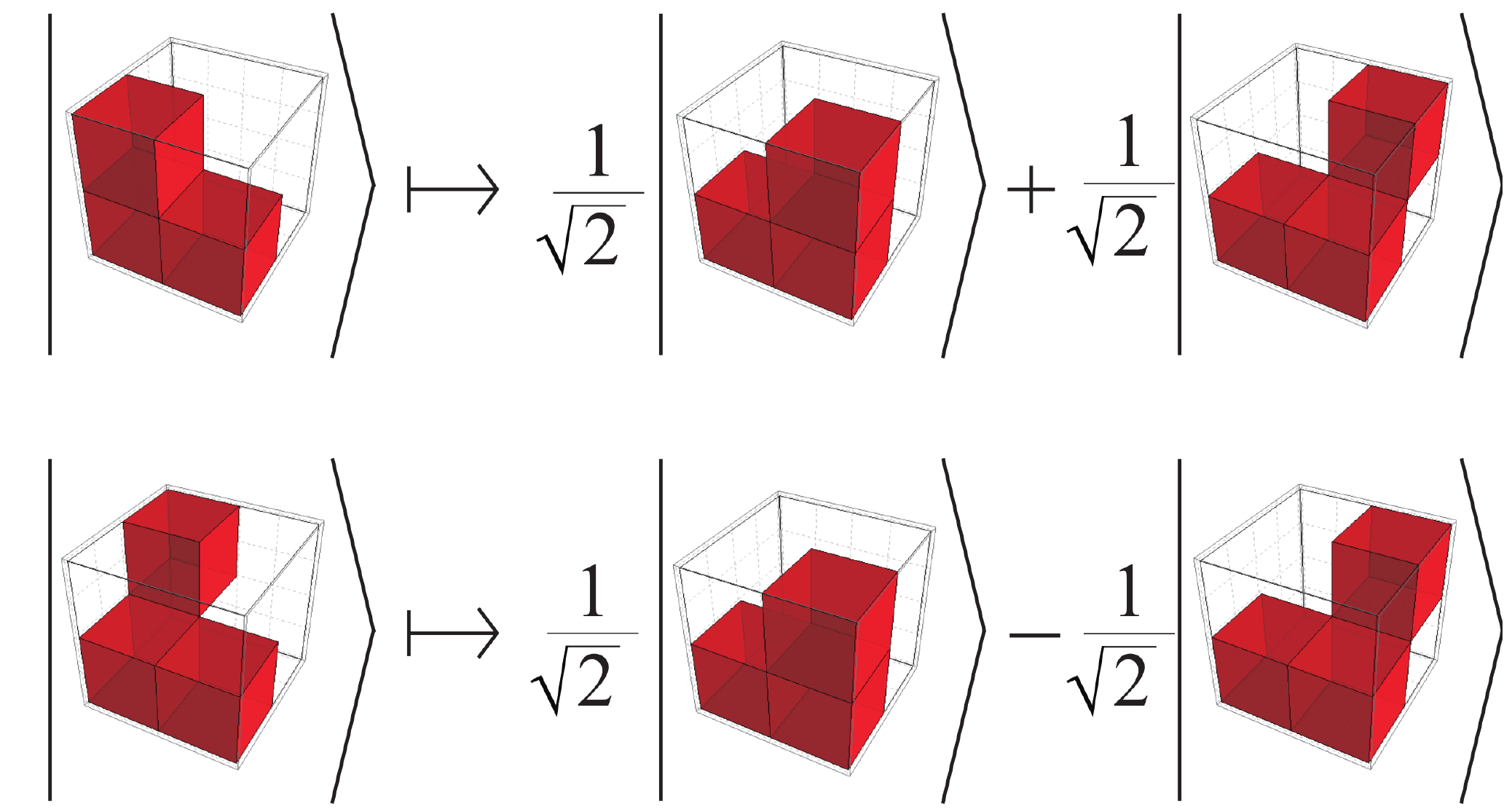}\\
{\small A signal which collides at an angle produces a superposition as in the Hadamard gate.}
\end{minipage}
\hfill
\begin{minipage}[t]{4cm}
\includegraphics[scale=.28, clip=true]{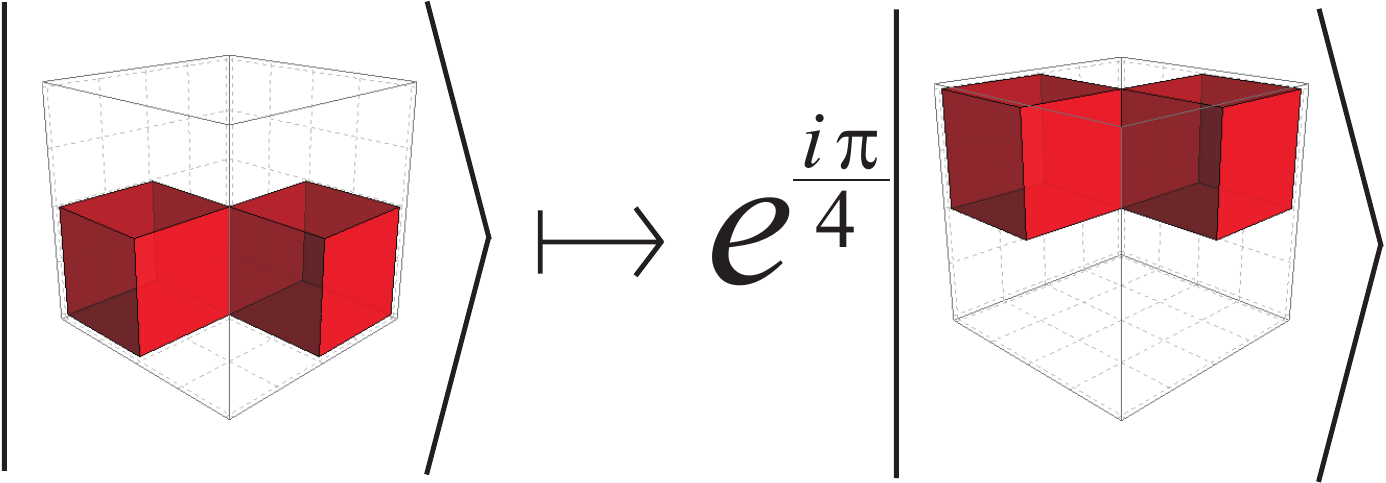}\\
{\small When two signals cross each other diagonally, a complex phase is added as in the controlled-($\frac{\pi}{4}$) gate.}
\end{minipage}
\hfill~
\\\vspace{5pt}
\caption{{\em A minimal intrinsically universal 3D PQCA}: the scattering unitary of the ``Quantum Game of Life''. \label{fig:QGOL}}}
\end{figure}

We began the search for an intrinsically universal PQCA in dimension $1$, which is feasible \cite{ArrighiDCM,ArrighiFI} (see also Fig. \ref{fig:UsimV}) but difficult, because wires cannot cross over. We then then tackled the problem in the general $n$-dimensional case, where we could find a much simpler solution \cite{ArrighiSimple,ArrighiNUQCA}. Eventually we reached a minimal, $3$-dimensional construction \cite{ArrighiQGOL}. This so-called `Quantum Game of Life' roughly works as follows. Each cell contains just one qubit---in Fig. \ref{fig:QGOL} the cell is represented as little cube, red if the qubit is in state $\ket{1}$, transparent if it is in state $\ket{0}$. At even steps, the cube of 8 qubits at cells $\{0,1\}^3$, as well as all of its $(2{\mathbb{Z}})^3$--translates, each undergo a scattering unitary $U$, synchronously. At odd steps, the cube of 8 qubits at cells $\{1,2\}^3$, as well as all of its $(2{\mathbb{Z}})^3$--translates, each undergo $U$ again, synchronously. The scattering unitary $U$ is given in Fig. \ref{fig:QGOL}, by means of its action over a small number of basis states (the full definition follows by linear extension and assuming rotation-invariance).\\
Observe that when there is just one red, e.g. at the left-bottom-front corner of the cube, it just moves across the cube. But because of the staggering between even and odd steps, it will find itself at the left-bottom-front corner of the new cube and again move across---this is the mechanism whereby signals are made. We also need to be able to redirect our signals, so we need red walls to form stable patterns, and demand that if a fifth red comes along, it bounces off. The Hadamard is implemented as a special case of this deflection : if the signal bounces on a edge, it skids or bounces, in a quantum superposition. Finally a two qubit interaction happens when two signals cross, and a phase gets added.\\

\begin{figure}
\centering
\includegraphics[height=3.6cm, clip=true, trim=0cm 0cm 0cm 0cm]{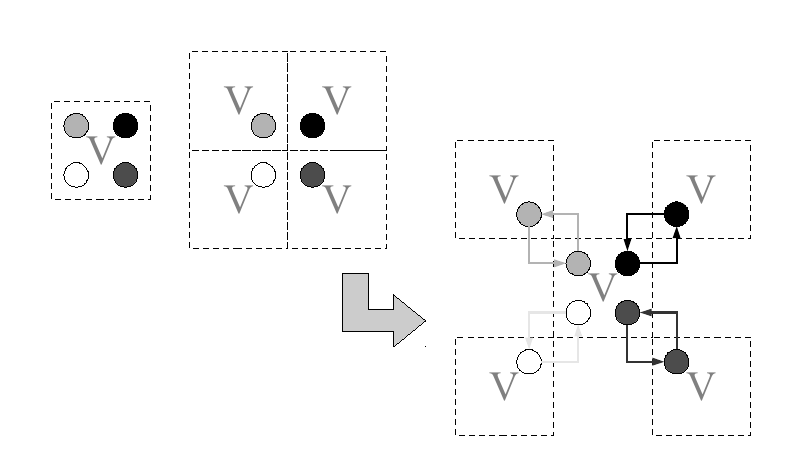}
\raisebox{-0.05cm}{\includegraphics[height=3.6cm, clip=true, trim=0cm 0cm 0cm 0cm]{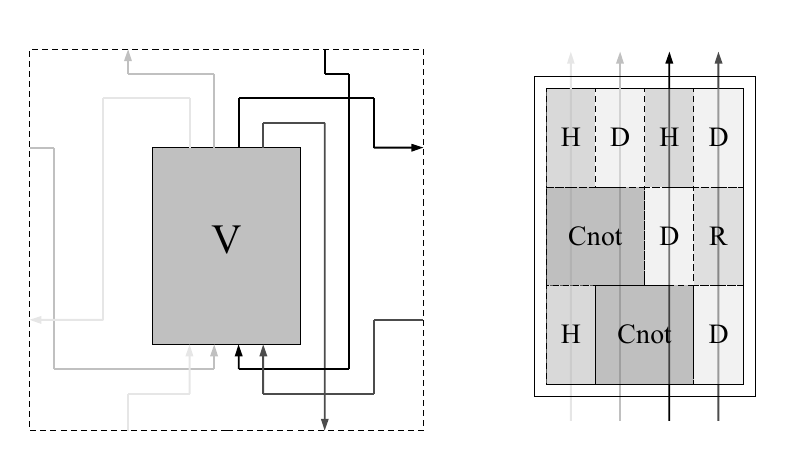}}
\\
\includegraphics[height=1.75cm, clip=true, trim=0cm 0cm 0cm 0cm]{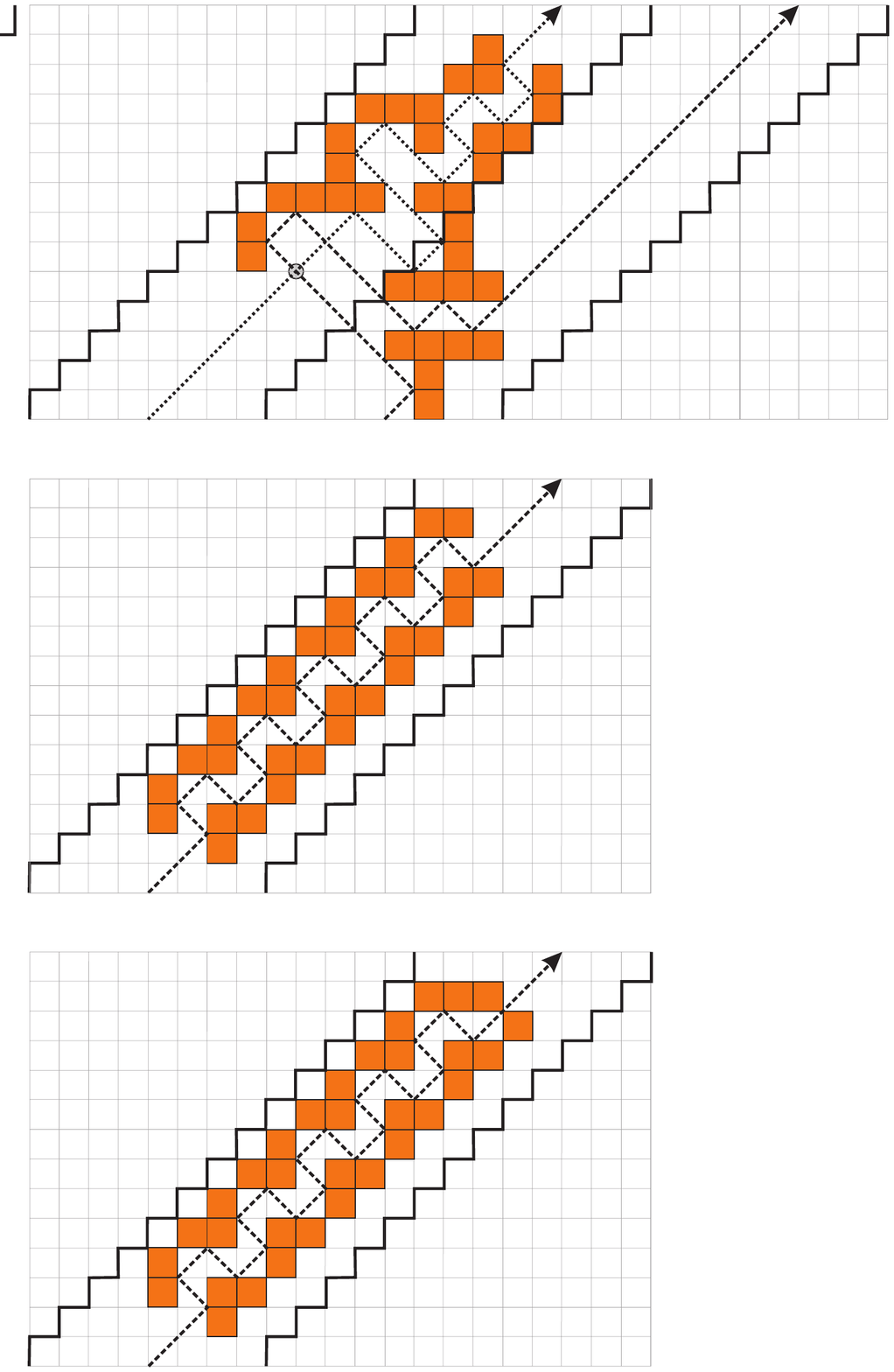}
\includegraphics[height=1.75cm, clip=true, trim=0cm 0cm 0cm 0cm]{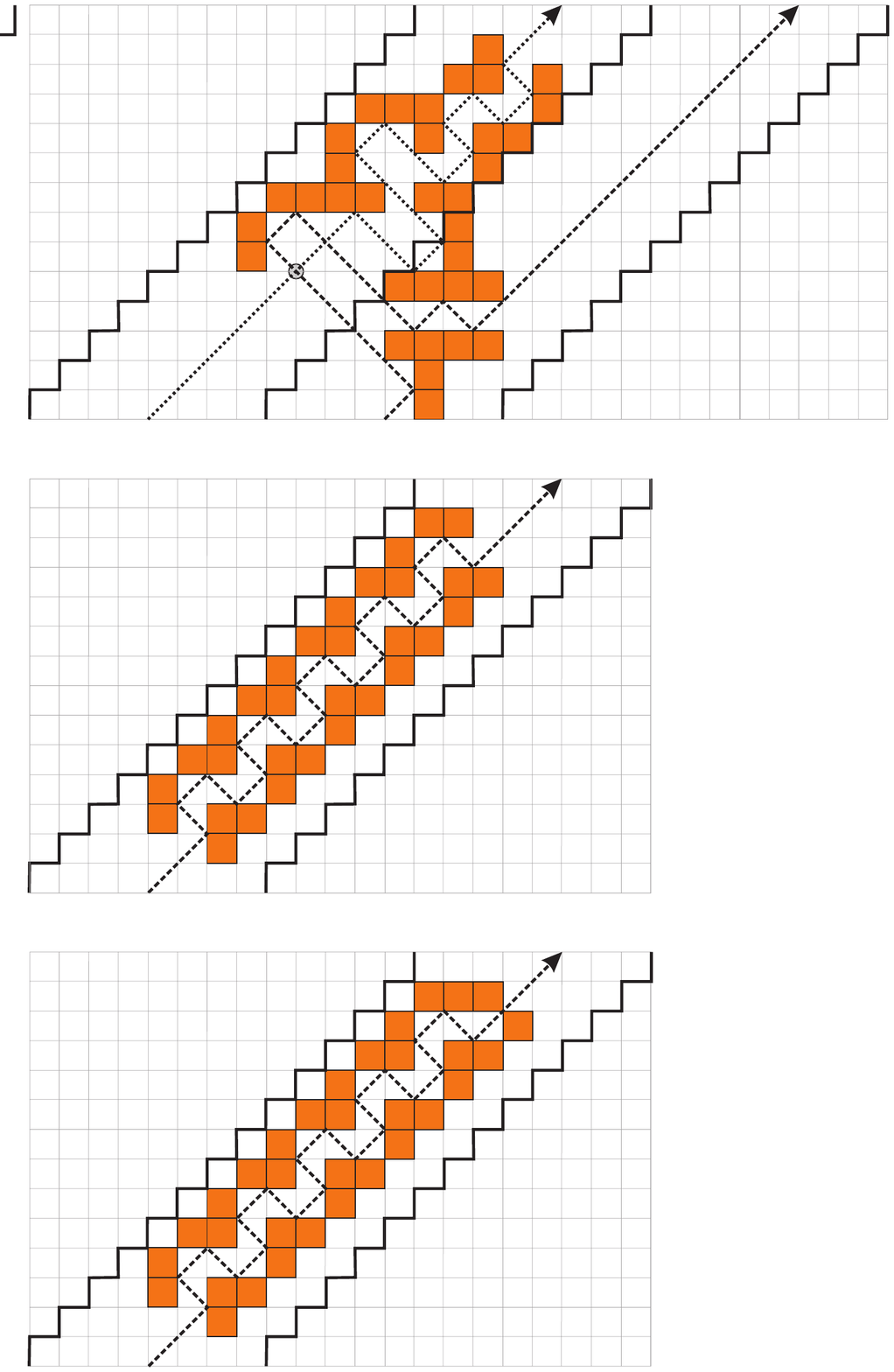}
\includegraphics[height=1.75cm, clip=true, trim=0cm 0cm 0cm 0cm]{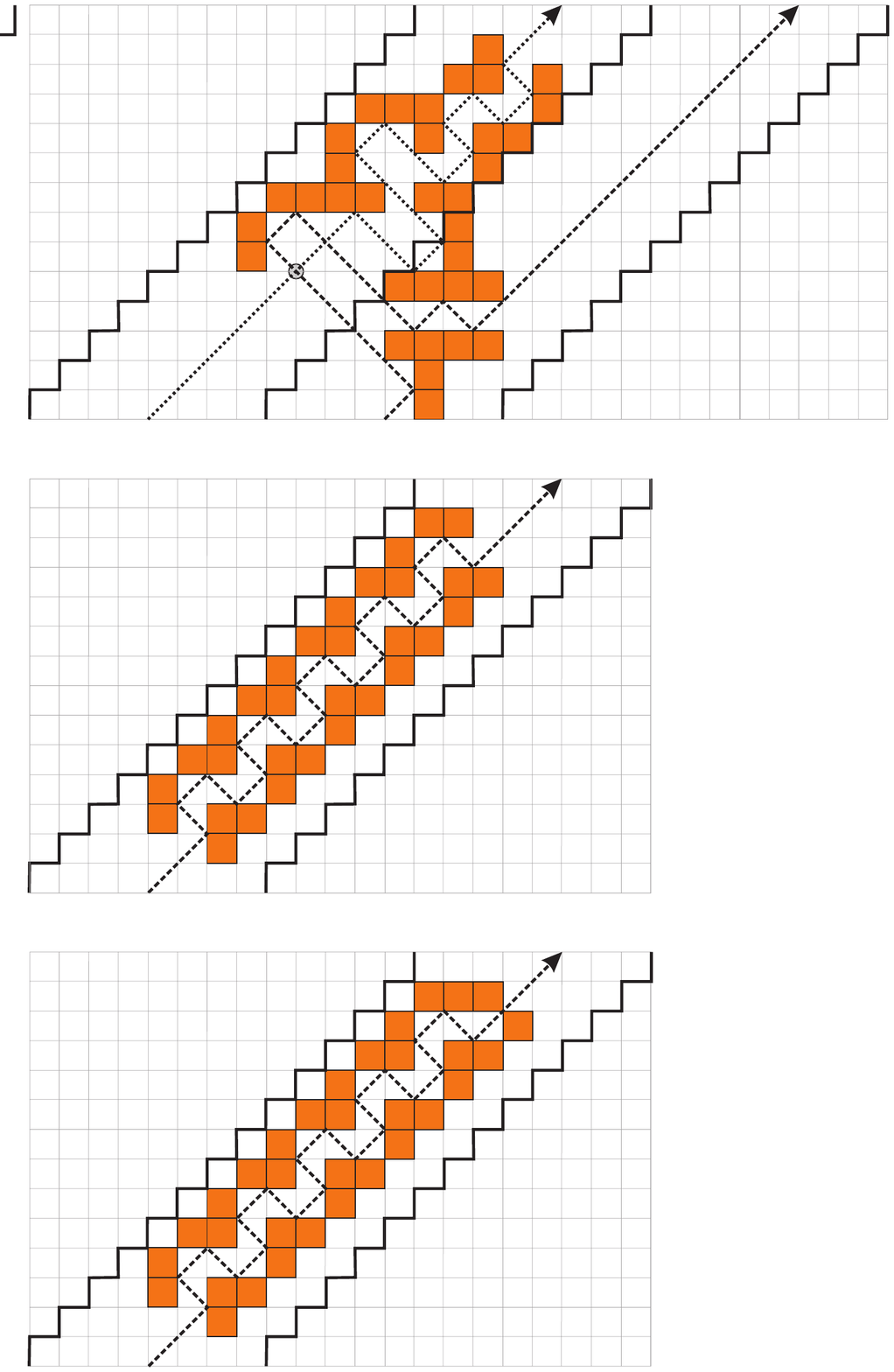}
\includegraphics[height=1.75cm, clip=true, trim=0cm 0cm 0cm 0cm]{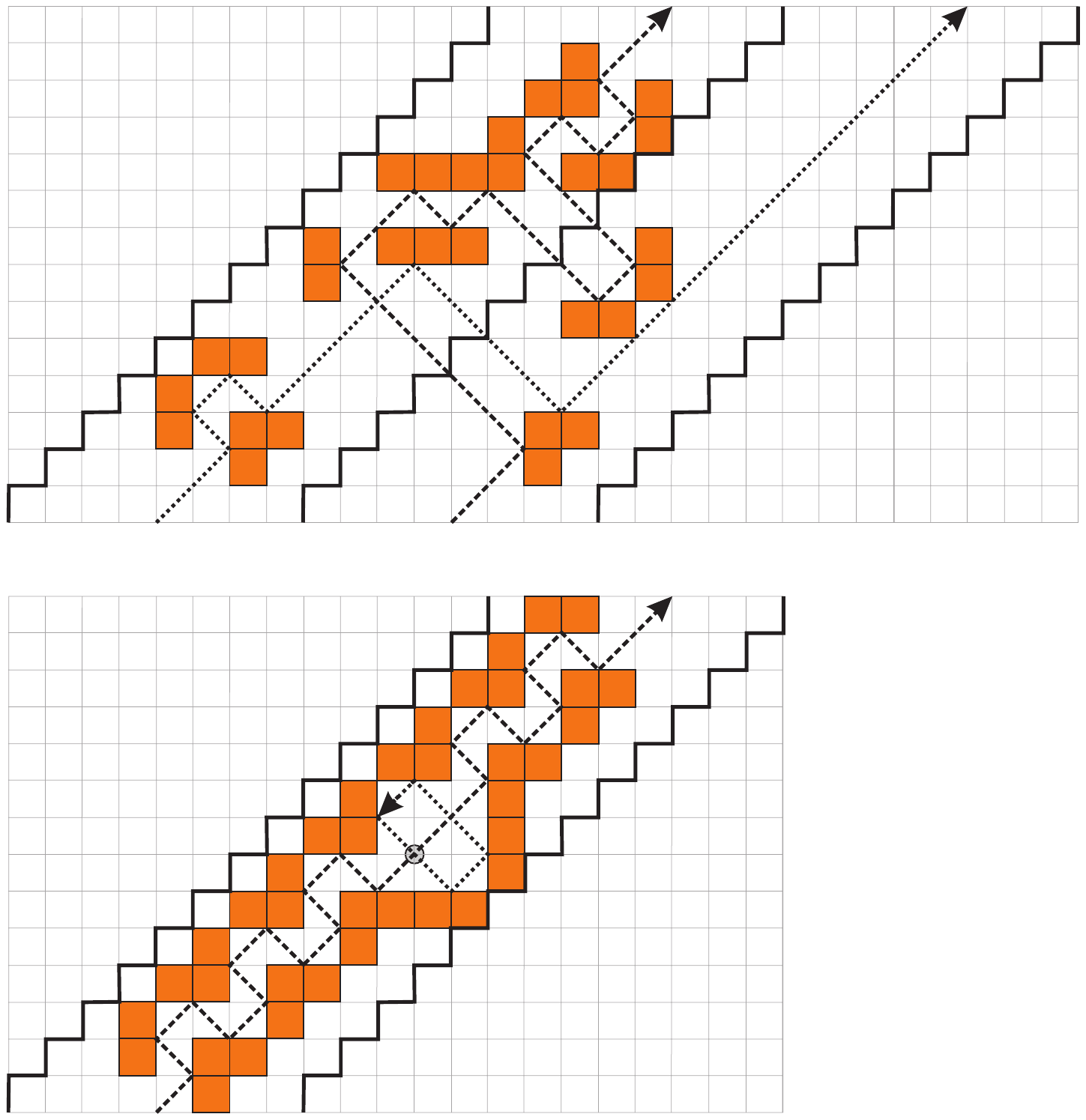}
\caption{\label{fig:Flattening} {\em Intrinsic simulation of a 2D PQCA by another.} {\em Top left:} The partitions at odd and even steps of the PQCA with scattering unitary $V$ overlap, but in the simulating PQCA they will be laid out side-by-side. {\em Top right:} The simulating PQCA emulates each $V$ in parallel, as well as the wirings between the $V$. {\em Bottom:} It does so by combining tiles (of fixed shapes and taking a fixed number of steps to be traversed, implementing universal quantum gates) into a layout that implements $V$.}
\end{figure}

The outline of the proof that this PQCA is intrinsically universal is as follows (see Fig. \ref{fig:Flattening} for a 2D illustration of this argument). First make fixed-shaped tiles, each implementing one of the universal quantum gates of the quantum circuit model in a fixed number of time steps, out of these walls and signals. Next, combine these tiles into a layout that implements $V$, the scattering unitary of the PQCA to be simulated. Repeat this layout across space. Finally, plug the outputs of each simulated $V$ gates into the inputs of the neighbouring ones, so that they feed each other, thereby implementing the staggered structure of the simulated PQCA.
  
Notice that the hereby constructed scattering unitary is over 8 qubits, which is much more complicated that the 2 qubits gate sets that are universal for quantum circuits. This is because simulation of a QCA $G$ has to be done in a parallel, spacetime-preserving manner, and because we must simulate not just one iteration of $G$ but several ($G^2$, $G^3$\ldots, i.e. after every iteration we must get ready for the next one). Thus intrinsic universality is a much more stringent requirement than quantum circuit universality.

\subsection{Other kinds  of universality}

\paragraph{Quantum Turing machine.} We just constructed a PQCA that is capable of simulating any other PQCA and hence any QCA.  But does it mean that this PQCA is capable of running any quantum algorithm? Clearly, the question amounts to whether QCA are universal for QC. In the sense of the quantum circuit model the answer is clearly yes, simply by inspection of the construction of Subsection \ref{subsec:intrinsicuniv}, Fig. \ref{fig:QGOL} and \ref{fig:Flattening} in particular. In the sense of the quantum Turing machine, the answer is clearly yes also, as was proven in \cite{WatrousFOCS}. In this construction, a QCA with alphabet $\Sigma\times \{0,1\} \times S$ simulates a Quantum Turing machine with alphabet $\Sigma$ and internal states $S$. The way this works is that each cell is capable of hosting the head of the Turing machine : it has enough state space to store both the symbol at this location of the tape; the internal state of the head; and whether the head is actually there or not. 

\paragraph{Quantum circuit universality.} The so-called ``physical universality'' is specific-kind of quantum circuit universality for QCA, more stringent than the early work of \cite{VanDamUniversal}. Indeed, in all the above-mentioned universality constructions, part of the state space $\Sigma$ of each cell is used to encode `the program' (i.e. what dynamics is to be simulated), whilst the other part is used to encode `the data' (i.e. the states whose evolution is being simulated). One may demand that this is not the case, and wish to have a convex region $X$ in which the data (the input to a quantum circuit) lies untouched, without any preparation, whereas only the surroundings $\overline{X}$ are allowed to code for the program (the quantum circuit to be applied). The requirement is that after a precise number of time steps, the data is to be found at the very same place, evolved according to the specified quantum circuit. Such a construction is achieved in \cite{SchaefferPhysUniv}. In this construction the data within $X$ is left to ``explode'' into $\overline{X}$, where it gets treated and redirected towards $X$.

\paragraph{Computability.} The (strong) physical Church-Turing thesis states that ``any function that can be (efficiently) computed by a physical system can be (efficiently) computed by a Turing machine''. Because there are concrete examples of functions that cannot be computed by Turing machines (e.g. the famous halting function $h:{\mathbb N}\to \{ 0,1 \}$), the physical Church-Turing thesis makes a strong statement about physics' (in)ability to compute. 

The discovery of QC algorithms has shaken the strong version of the thesis. But what about the original version---could it be that a QC might compute functions that were not computable classically? Quantum theory imposes that physical systems evolve unitarily : according to a unitary matrix when the system is finite-dimensional; according to a unitary operator otherwise. It follows that finite quantum circuits can always be simulated (very inefficiently) on a classical computer just via matrix multiplications. Therefore these do not endanger the original version of the thesis. Yet nothing forbids \cite{NielsenComputability,Kieu,NielsenMore} that unitary operators, on the other hand, break the thesis, e.g. $U=\sum\ket{i,h(i)\oplus b}\bra{i,b}$. That is unless the limitation comes from other physical principles. 

That physically motivated limitations lead to the physical Church-Turing thesis 
was already argued in \cite{Gandy} by Gandy, Turing's former PhD student. There, the main idea is that causality (i.e. bounded velocity of propagation of information) together with homogeneity (i.e. the rules of physics are the same everywhere and everywhen) and finite density (i.e. bounded number of bits per volume) entail CA-like evolutions, which are computable. Actually the proof relies upon a few more assumptions (an euclidean-like space, whose state can be described piecewise) and a delicate formalism. But the main issue with this proof of the physical Church-Turing thesis based upon physics principles, is that it complete ignores quantum theory. Quantum theory demands that the bounded density principle be updated, changing the word `bits' to  the word `qubits'. 

When we do so, the updated set of principles entail QCA-like evolutions in the axiomatic-style of Def. \ref{def:qca}. We can then apply Th. \ref{th:ucausal}. Armed with a robust notion of computability upon vector spaces \cite{ArrighiCIE}, we were able show that these are computable \cite{ArrighiGANDY}. This provided a proof of the Church-Turing thesis based upon quantum physics principles. 

\section{Simulation}\label{sec:simulation}

Let us take a step back to realize how the previous results chain up. In Section \ref{sec:structure} we showed that discrete-space discrete-time quantum theory, i.e. QCA, can be intrinsically simulated by PQCA. In Section \ref{sec:universality} we constructed a minimal, intrinsically universal PQCA. Logically, this entails that any lattice discrete-space discrete-time quantum physics phenomenon can be expressed within this particular PQCA.\\
Let us evaluate whether such statements are applicable in practice. Let us pick up one of the simplest and most fundamental physics phenomenon, namely the free propagation of the electron, and see whether it can be re-expressed by means of some simple PQCA.

\subsection{The Dirac QCA}\label{subsec:DiracQCA}

\begin{figure}
{\centering
\includegraphics[width=5cm]{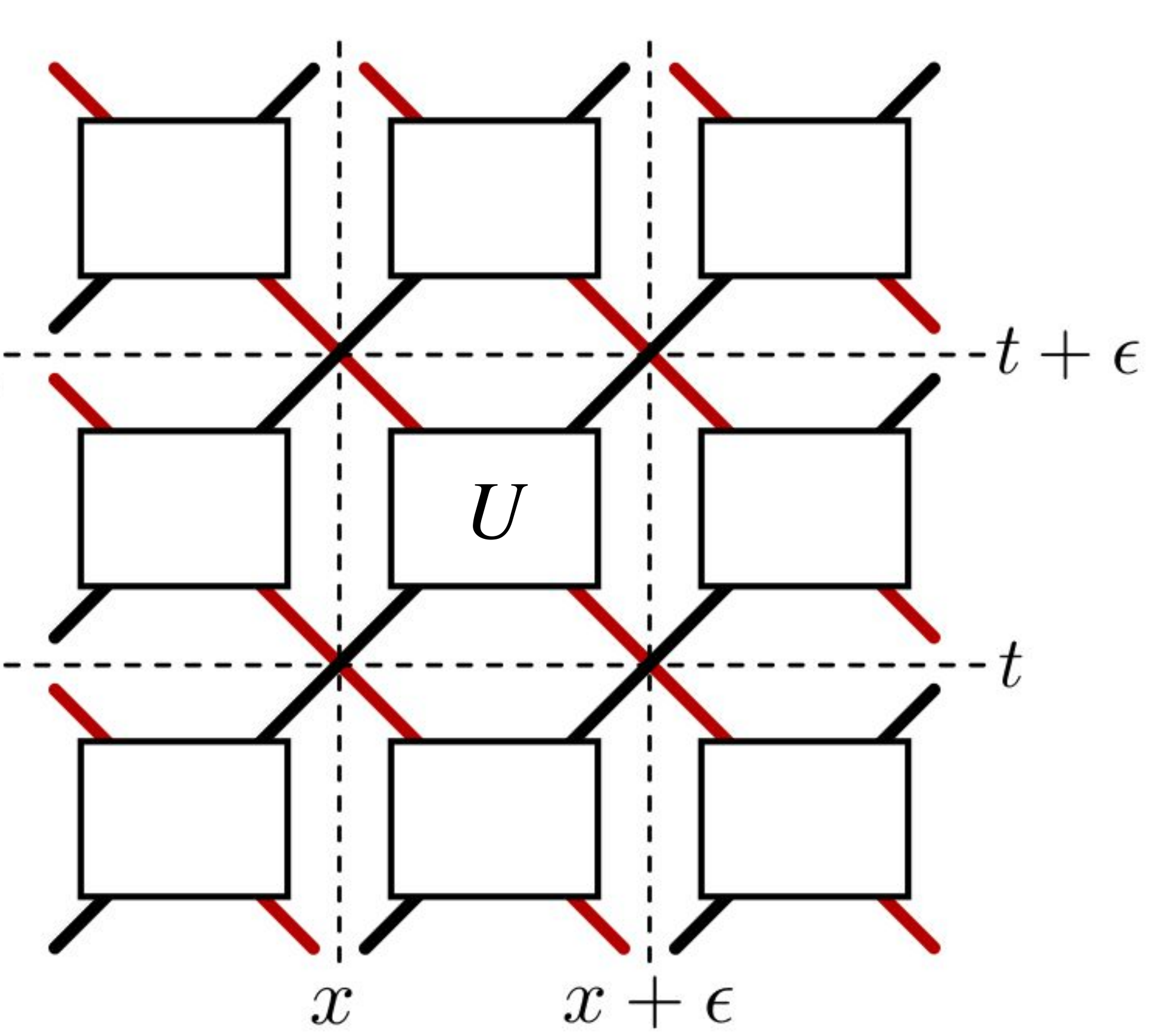}
\caption{\label{fig:DiracQCA}{\em The Dirac QCA:} each space-time point $(x,t)$ can be occupied by a left and a right-moving particle. All gates are identical and given by the matrix $U$, which lets a particle change direction with an amplitude that depends on its mass. 
}
}
\end{figure}

The equation governing the free propagation of an electron is called the Dirac equation. Let us describe a PQCA model for it, the so-called Dirac QCA. For this Dirac QCA we adopt the conventions depicted in Fig.~\ref{fig:DiracQCA}. Each red (resp. black) wire carries a qubit, which codes for the presence or absence of a left-moving (resp. right-moving) electron. Thus there can be at most two electrons per site $x = \varepsilon k$, $k \in \mathbb Z$ (where the red and black wires cross).

The scattering unitary of this QCA is given by
\begin{align*}
&U= \left(
\begin{array}{cccc}
1 & 0 & 0 & 0\\
0 & -\ii s & c & 0\\
0 & c & -\ii s & 0\\
0 & 0 & 0 & 1
\end{array}
\right)\\
&=1\oplus \sigma_1 \exp(-\ii m\varepsilon \sigma_1) \oplus 1
\end{align*}
with $c=\cos(m \varepsilon)$ and $s=\sin(m \varepsilon)$, where $m$ stands for the mass of the electron, and $\varepsilon$ the spacetime discretization step, and the Pauli matrices as usual: 
\begin{align*}
&\sigma_0=\left(\begin{array}{cc}
 1 &0 \\
 0 &1\end{array}\right),\quad\sigma_1=\left(\begin{array}{cc}
 0 &1 \\
 1 &0\end{array}\right),\quad\sigma_2=\left(\begin{array}{cc}
 0 &-i \\
 i &0\end{array}\right),\quad\sigma_3=\left(\begin{array}{cc}
 1 &0 \\
 0 &-1\end{array}\right).
\end{align*}
Notice that the components are ordered so that when the mass is zero, the particles do not change direction, i.e. a right-moving electron is transfered from $x$ to $x+\varepsilon$, etc.


To see whether this QCA implements the Dirac equation, let us consider the one-particle sector, i.e. restrict to the QCA to the subspace spanned by states of the form $\ket{\ldots 00100 \ldots}$. Let $\psi^{-}(t,x)$ (resp. $\psi^{+}(t,x)$) be the amplitude of that single particle being on a red (resp. black) wire at their intersection at point $(x,t)$. We have
\begin{equation*}\label{eq:sp}
\begin{split}
\psi^+(t+\varepsilon,x)&=c\psi^+(t,x-\varepsilon)-\ii s\psi^-(t,x)\\ 
\psi^-(t+\varepsilon,x)&=c\psi^-(t,x+\varepsilon)-\ii s\psi^+(t,x).
\end{split}
\end{equation*}
Expanding to the first order in $\varepsilon$ gives
\begin{align*}
\varepsilon\partial_t \psi^+&=-\varepsilon\partial_x \psi^+-\ii m\varepsilon\psi^-\\ 
\varepsilon\partial_t \psi^-&=+\varepsilon\partial_x \psi^--\ii m\varepsilon\psi^+\\
\partial_t \psi&=-\sigma_3\partial_x\psi -\ii m\psi\label{eq:Dirac}
\end{align*}
with $\psi=(\psi^+,\psi^-)^T$. We recognize the $(1+1)$--dimensions Dirac equation governing the free propagation of an electron of mass $m$. The upper (resp. lower) component of the vector gets transported right (resp. left), but the mass mixes these components.

\subsection{Further simulation results}\label{subsec:FurtherSimulationResults}

\paragraph{The one-particle sector : quantum walks.} Subsection \ref{subsec:DiracQCA}, illustrated how to restrict QCA to their `one-particle sector', i.e. to configurations of the form $\ket{\ldots 00s00 \ldots}$ with $s\in S=\Sigma\setminus\{0\}$. Each of these represents a single `particle', standing at a some position $x$ with internal state $s$. The Hilbert space of superpositions of these configurations is a rather small subspace of ${\cal H}$, which is better described as ${\cal H}_{\mathbb{Z}}\otimes {\cal H}_{S}$---i.e. superpositions of position-state pairs $\ket{x,s}$. The amplitude of $\ket{x,s}$ is usually written $\psi^s(x)$ and one needs $\sum_{x,s} |\psi^s(x)|^2=1$. This one-particle sector of QCA has a life of its own. It is the playground for a huge field of research known by the name of Quantum Walks (QW). A QW, therefore, is essentially an operator driving the evolution of a single particle on the lattice, through local unitaries.

One reason for the popularity of QW is that a whole series of novel Quantum Computing algorithms,
for the future Quantum Computers, have been discovered via QW, e.g. \cite{BooleanEvalQW,ConductivityQW}, or are better expressed using QW, e.g the Grover search. In these QW-based algorithms, however, the walker usually explores a graph, which is encoding the instance of the problem, rather than a fixed lattice. No continuum limit is taken.

The focus here will remain with the other reason, which is the ability of QW to simulate certain quantum physical phenomena, in the continuum limit---thereby providing:
\begin{itemize}
\item Simple discrete toy models of physical phenomena, that conserve most symmetries
(unitarity, homogeneity, causality, sometimes even Lorentz-covariance)---thereby providing
playgrounds to discuss foundational questions in Physics.
\item Quantum simulation schemes, for the near-future simulation devices, in the way that was envisioned by Feynman when he invented QC \cite{FeynmanQC,FeynmanQCA}.
\item Stable numerical schemes, even for classical computers---thereby guaranteeing convergence
as soon as they are consistent.
\end{itemize}

Subsection \ref{subsec:DiracQCA} is a simplified presentation of the original arguments by \cite{BenziSucci,Bialynicki-Birula,MeyerQLGI} suggesting that QW can simulate the Dirac equation. In \cite{ArrighiDirac} we gave a rigorous proof of convergence, given regular enough initial conditions, including in $(3+1)$--dimensions---without the need to actually solve the QW evolution as was done in \cite{StrauchPhenomena}. The zitterbewegung effect is discussed in \cite{ChandrashekarDiracQW}. An axiomatic derivation of these schemes is given in \cite{DAriano,DAriano3D,RaynalDirac}. We discussed conservation of symmetries, including Lorentz-covariance in 
\cite{ArrighiLorentzCovariance}, and so did \cite{PaviaLORENTZ,PaviaLORENTZ2,DebbaschLORENTZ}. The Klein-Gordon equation can also be simulated via this QW once the appropriate decoupling is performed, as explained in \cite{IndiansDirac,MolfettaDebbasch} and in our subsequent generalization \cite{ArrighiKG}. The Schr\"odinger equation can be obtained in a similar fashion, but by scaling space and time differently, i.e. $\Delta_x=\varepsilon$ but $\Delta_t=\varepsilon^2$, see \cite{StrauchCTQW,StrauchShrodinger,BoghosianTaylor2}.

Once it was realized that QW could simulate free particles, the focus shifted towards simulating particles in some background field \cite{cedzich2013propagation,di2014quantum,di2016quantum,arnault2016quantum,marquez2017fermion}, by means of non-translation-invariant QW. The question of the impact, of these inhomogeneous fields, upon the propagation of the walker gave rise to lattice models of Anderson localization \cite{WernerLocalization,JoyeLocalization}. Surprisingly, they even gave rise to lattice models of particles propagating in curved spacetime \cite{MolfettaDebbasch2014Curved,DebbaschWaves,ChandrashekarSSCurved}, see also our generalizations \cite{ArrighiGRDirac,ArrighiGRDirac3D}.
 
\paragraph{The many-particle sector.} Recently, the two-particle sector of QCA was investigated from a quantum simulation perspective, with the two walkers interacting via a phase (similar to the Thirring model \cite{DdV87}). This was shown to produce molecular binding between the particles \cite{ahlbrecht2012molecular,PaviaMolecular}. In the many-particle sector, the problem of defining a concrete QCA that would simulate a specific interacting quantum field theory (QFT) had remained out of reach until \cite{ArrighiQED}.  In this paper, we were able to give a first QCA description of quantum electrodynamics (QED) in $(1+1)$--dimensions (a.k.a the Schwinger model).  

\paragraph{Trotterization of a nearest-neighour Hamiltonian.} QCA are in discrete-space and discrete-time. Let us consider their cousins in discrete-space but continuous-time, i.e. lattices of quantum systems interacting according to a nearest-neighbour translation-invariant Hamiltonian. These are very common in Physics e.g. in condensed matter or statistical quantum mechanics (spin chains, Ising models, Hubbard models\ldots), or towards QC (as candidate architectures \cite{FitzsimonsTwamley,Benjamin,Twamley,WeinsteinHellberg}, for quantum information transport \cite{Bose}, for  entanglement creation \cite{Subrahmanyam,SubrahmanyamLakshminarayan,BrennenWilliams}, as universal QC \cite{VollbrechtCirac,NagajWocjan}\ldots).\\
Up to groupings and reencodings, focussing here on the 1D case just for simplicity, nearest-neighbour translation-invariant Hamiltonians work as follows \cite{VollbrechtCirac}. A global, continuous-time evolution $G(t)$ is induced, by giving a hermitian matrix $h$ over $\mathcal{H}_{\Sigma}\otimes \mathcal{H}_{\Sigma}$ verifying that $h\ket{00}=0$, according to $$G(t)=e^{-iHt}=\sum_n \frac{(-iHt)^n}{n!}$$ 
with $H=\sum_x h_x$ where $h_x$ stands for $h$ as acting over positions $x$ and $x+1$.\\
From a practical implementation point-of-view, nearest-neighbour Hamiltonians are central, and will be for a long time. Indeed, although there are a number of remarkable exceptions \cite{WernerElectricQW,Alberti2QW,Sciarrino}, most the leading-edge lattice-based quantum simulation devices remain better described as continuous-time evolutions \cite{Bloch}. From a theoretical point-of-view, however, nearest-neighbour Hamiltonians suffer the same downsides as the rest of non-relativistic quantum mechanics to which they pertain. Namely, strictly speaking they do allow for superluminal-signalling, hopefully in some negligible, exponentially tailing off manner, relying upon some Lieb-Robinson type of argument that can sometimes go wrong \cite{EisertSupersonic}. Intuitively, this is because even though it is the case that in an infinitesimal $\delta_t$ of time a cell only interacts with its neighbour, this is no longer true after any finite period of time $\Delta_t$, however small, as $G(\Delta_t)$ includes terms of the form $\prod_x h_x$ and $[h_x,h_{x+1}]\neq 0$ if information is to propagate at all.  

Still, there is a strong connection between nearest-neighbours Hamiltonians and QCA, which arises from the Trotter-Kato formula (a.k.a Baker-Campbell-Thomson or operator-splitting method): 
$$e^{i\Delta_t (H_o+H_e)}= e^{i\Delta_t H_o}e^{i\Delta_t H_e}+O(\Delta_t^2).$$
Indeed, let $H_e=\sum_{x\in2\mathbb{Z}} h_x$ and $H_o=\sum_{x\in2\mathbb{Z}+1} h_x$, and readily get that $G(\Delta t)\approx G$, where $G$ is the PQCA induced by the scattering unitary $U=e^{i\Delta_t h}$---back in discrete-space discrete-time.\\ Are space and time back on an equal footing, thanks to this `trotterization procedure'? Not quite. For this approximation to hold mathematically, one still needs that $\Delta_t\ll\Delta_x$---for instance by setting $\Delta_x=\varepsilon$ and $\Delta_t=\varepsilon^2$ as was done earlier in order to get the non-relativistic, Schr\"odinger equation. Thus, QCA arising by trotterizing nearest-neighbour Hamiltonians are generally non-relativistic models \cite{ArrighiChiral}, unless they are carefully engineered otherwise, as we did in \cite{ArrighiUnifiedQW}. Still, this is not the only use of the Trotter-Kato formula, which has turned out to be an ubiquitous mathematical tool in this field. 

\paragraph{Noise, thermodynamics.} To the best of our knowledge there has not been much studies of noisy QCA, i.e. replacing unitary operators by quantum operators (a.k.a TPCP maps), with a handful of exceptions \cite{AvalleNoisyQCA}, in the many-particle sector. The one-particle sector has been thoroughly studied on the other hand, e.g. studying the transition for QW (ballistic transport) to random walks (diffusion) \cite{LoveBoghosian}. In \cite{ArrighiNoisyQW} we studied this transition in parallel with the continuum limit to PDE, where the Dirac equation turns into a Lindblad equation and then a telegraph equation---making the argument that noisy quantum simulation devices can still be useful, to simulate noisy quantum systems.\\ 
The connection to toy models of thermodynanics is also a promising one. In \cite{RomanelliTemperature} the thermalization of a QW is observed. In the many-particle sector, Clifford QCA \cite{SchlingemannWerner,GutschowCLIFFORD} (a subcase of QCA which can be classically simulated) were shown to produce fractal pictures \cite{NesmeGutschow} and then gliders \cite{GUWZ}, used to show bounds on entanglement propagation and von Neumann entropy creation in QCA. These hands-on toy models may eventually bring about interesting, complementary point-of-view on the blossoming field of quantum thermodynamics, by taking space into account, which is believed to be a key ingredient of the quantum-to-classical transition \cite{PazZurek}. 

\section{Conclusion}\label{sec:conclusion}

{\em Summary.} Quantum cellular automata (QCA) are quantum evolutions of lattices of quantum systems, as resulting from nearest neighbour-interactions. This sentence, however, could be understood in many ways:
\begin{itemize}
\item In terms of finite-depth circuits of local quantum quantum gates, infinitely repeating across space---amongst the various shapes of circuits proposed, it is now known that the simplest, namely Partitioned QCA (PQCA), can simulate all the others.
\item In continuous-time in terms of sum of local Hamiltonians. It is now clear that integrating these over a small period of time, yields a discrete-time evolution that can again be simulated by a PQCA.
\item In more abstract terms, as the axiomatic requirement that the evolved state vector; or quasi-local algebra; or density matrix, be locally dependent. Locally dependent state vectors have turned out to  make little sense, but the last two were shown to be equivalent and ultimately again simulated by PQCA.
\end{itemize}
Amongst PQCA, some instances were shown to simulate the quantum Turing machine; the free electron; the electron in an electromagnetic field; the electron in curved spacetime, including in $(3+1)$--dimensions. Lately some were shown to model interacting quantum field theories as PQCA, namely the Thirring model, and QED in $(1+1)$--dimensions. Ultimately, some particular instances were shown to simulate all other instances. 

{\em Perspectives.} I would love to see QCA come true, implemented in the labs. This is certainly a fascinating topic in which cold atoms \cite{WernerElectricQW,Alberti2QW,Bloch}, integrated fiber optics \cite{Sciarrino} and hopefully superconducting qubits \cite{SuperconductingQSim} will have a say. As a theoretician it would be adventurous for me to comment much on this perspective. It seems quite likely however that noisy implementations will see the light in the next ten years, with progressive improvements from there. At least nothing, at theoretical level, prevents it. The fact that much physics phenomena can be cast as QCA is an encouragement in this sense; it suggests that physics might naturally implement QCA, at its fundamental level. \\
But exactly, how much particle physics can be recast as QCA? Will QCA provide us with an alternative mathematical framework for interacting quantum field theories? Hopefully a clearer one, more explanatory, readily providing us with quantum simulation algorithms to draw predictions? These questions are, at the theoretical level, the obvious and most likely continuation of the trend of work which I presented in this overview. I am very optimistic about them : my personal belief is that these will be answered positively, probably within the next ten years. Of course I foresee many technical difficulties along the way, but no good reason why this could not be done. Thus, I wish to take this opportunity to encourage young researchers to engage these noble and realistic aims, hopefully enjoying the same collaborative spirit that has reigned over this research community in the last decade.\\
I do not believe, however, that QCA can account for General Relativity. Nor do I believe that they constitute the ultimate model of distributed Quantum Computation. In both cases, an ingredient is missing : the ability to depart from the grid and make the topology dynamical. Quantum Causal Graph Dynamics \cite{ArrighiQCGD} are unitary operators over quantum superpositions of graphs. The graphs constrain the evolution by telling whom can interact with whom ; but at the same time they are the subject of the evolution, as they may vary in time. The possible connections between this further generalization of QCA, and Quantum Gravity, are intriguing.

\section*{Acknowledgements} I was lucky to have, as regular co-authors, great researchers such as Pablo Arnault, C\'edric B\'eny, Gilles Dowek, Giuseppe Di Molfetta, Stefano Facchini, Terry Farrelly, Marcelo Forets, Jon Grattage, Iv\'an M\'arquez, Vincent Nesme, Armando P\'eres, Zizhu Wang, Reinhard Werner. I would like to thank Jarkko Kari and Grzegorz Rozenberg for inviting me to write this overview, a task which I had been postponing for too long.

\bibliography{biblio}
\bibliographystyle{plain}

\end{document}